\newtheorem{theorem}{Theorem}[section]
\newtheorem{proposition}[theorem]{Proposition}
\newtheorem{corollary}[theorem]{Corollary}
\theoremstyle{definition}
\newtheorem{definition}[theorem]{Definition}
\newtheorem{example}[theorem]{Example}
\providecommand{\keywords}[1]{\textbf{\textit{Keywords:}} #1}
\providecommand{\jel}[1]{\textbf{\textit{JEL Classifications:}} #1}
\let\vec\mathbf
\begin{document}
\author{Joseph Whitmeyer\thanks{Department of Sociology, University of North Carolina at Charlotte} \and Mark Whitmeyer\thanks{Institute for Microeconomics and Hausdorff Center for Mathematics, University of Bonn \newline Email: \href{mailto:mark.whitmeyer@gmail.com}{mark.whitmeyer@gmail.com}}}
\title{Mixtures of Mean-Preserving Contractions\footnote{We thank Ying Chen and two anonymous referees for their useful suggestions. We are also grateful to Gill Grindstaff and Jan Schlupp for their help and advice. Mark Whitmeyer's work was generously funded by the Deutsche Forschungsgemeinschaft (DFG, German Research Foundation) under Germany's Excellence Strategy-GZ 2047/1, Projekt-ID 390685813.}}
\date{\today{}}
\maketitle
\begin{abstract}
Given any purely atomic probability distribution with support on $n$ points, $P$, any mean-preserving contraction (mpc) of $P$, $Q$, with support on $m > n$ points is a mixture of mpcs of $P$, each with support on at most $n$ points. We illustrate several applications of this result to Bayesian persuasion and information design.
\end{abstract}
\keywords{Mean-Preserving Contraction, Information Design, Bayesian Persuasion, Fusion of a Probability Distribution}\\
\jel{C72; D82; D83}

\section{Introduction}
The mean-preserving contraction (mpc) of a probability distribution, in which a probability distribution is altered by collapsing portions of its measure to their respective barycenters, is an important concept in economics. In recent years there has been a surge of papers in which information structures are endogenous, and many works involve optimization problems in which one or many agents choose from the set of mpcs of a given distribution. 

The purpose of this paper is to establish a result that is useful both in clarifying the economic intuition behind the solutions to, as well as simplifying the solving of, such problems. Our main finding, Theorem \ref{main}, establishes that given a purely atomic probability measure $P$ on $\mathbb{R}$ with support on $n$ points, any mpc of $P$ with support on $n+1$ points, $Q$, is the convex combination of two mpcs of $P$, $Q'$ and $Q''$, each of which have support on at most $n$ points. An important corollary of this result, Corollary \ref{useful}, is that any mpc, $Q$, of $P$ is a mixture of mpcs with support on at most $n$ points. 

These discoveries imply several compelling results related to Bayesian persuasion and information design,\footnote{The number of papers in this area has grown rapidly in the decade following the writing of \cite{rayo} and \cite{kam}, the two papers that first sparked the widespread interest in the topic. \cite{mor} provides a recent and relatively self-contained overview of the topic and \cite{kam2} provides a comprehensive survey of the literature.} a literature that studies how a principal (or principals) can design information structures in various environments in order to achieve some objective. One particularly tractable class of problems are those in which the state is a random variable and the sender and receiver's utilities are linear in the state (see e.g. \cite{gent,kol,kol2}; and \cite{martini}). Corollary \ref{useful} allows for an elementary proof for the upper bound of the number of messages needed by the persuader in an optimal mechanism in such problems (Proposition \ref{bound}).

Corollary \ref{useful} also has useful ramifications for competitive Bayesian persuasion problems, in which multiple principals compete by designing information structures. Our result implies that when constructing equilibria, instead of checking deviations to \textit{any} mpc of $P$, one need check only deviations to mpcs with support on $n$ points. In Subsection \ref{example} we present a simple competitive persuasion problem in which this result is used, and \cite{whit3} and \cite{jain} both appeal to Corollary \ref{useful} in similar settings to this example. The existence of a symmetric mixed strategy equilibrium with support on $n$ (or fewer)-point mpcs (Proposition \ref{mix}) also follows from this corollary.

This paper contributes to the literature in economics and mathematics on statistical experiments and majorization. The first foray into the area is the seminal work of \cite{hard29} (see also \cite{hard}), who establish a fundamental result on majorization, which was shortly followed and generalized by \cite{black} and \cite{strass}.\footnote{Another related paper is \cite{elt}, who provide a constructive proof of \cite{hard29}'s main result and show how it can be generalized to infinite-dimensional spaces.} The equivalent notion of a mean-preserving spread (mps)--the opposite of an mpc in which portions of a probability measure are ``spread'' out--was subsequently introduced by \cite{rot}. 


More recently, \cite{hill} introduce the concept of a ``fusion'' of a probability distribution. A fusion, $Q$, of a distribution $P$ is any distribution that can be obtained by collapsing parts of the mass of $P$ to their respective barycenters. In the setting we analyze in this paper, where $P$ is a measure on $\mathbb{R}$ (with finite first moment), the notions of fusion and mean-preserving contraction are equivalent. A key contribution of \cite{hill} is the construction of the set of fusions: just as a measurable function can be approximated by simple functions, which are themselves finite linear combinations of indicator functions, a fusion can be approximated by simple fusions, which are themselves finite compositions of elementary fusions.\footnote{In their parlance, an elementary fusion is just the result of collapsing part of a single (Borel) subset of the probability measure to the barycenter of the subset.} 

In this spirit, the contribution of our paper can be seen as it providing an alternate characterization for the set of fusions of a (finitely supported) probability measure $P$ (on $\mathbb{R}$), one that we believe should be particularly compelling (and useful) to economists. Rather than taking elementary fusions as the building blocks, as \cite{hill} do, our basic objects are the fusions of $P$ with support on $n$ points. Consequently, Corollary \ref{useful} implies that any fusion can be approximated by finite \textit{mixtures} of fusions with $n$-point support.

As we discuss later on in the paper, the set of all mpcs of a discrete probability measure is a compact and convex set. By the Krein-Milman theorem, such a set is merely the closed and convex hull of its extreme points. This is valuable in well-behaved linear optimization problems, in which the objective function obtains a maximum at an extreme point of the constraint set. This theorem (Krein-Milman) and its extension, Choquet's theorem, provide part of the impetus behind the remarkable recent paper by \cite{strack}, who characterize the extreme points of the set of monotonic functions that majorize (or are majorized by) a given monotonic function such as, e.g., a cumulative distribution function (cdf). They thus characterize the set of mpcs and mpss for an arbitrary distribution. 

We believe that our paper can be seen as complementary to \cite{strack}. Indeed their main result in which they characterize the extreme points of the set of mpcs of a random variable with cdf $F$--Theorem 2--requires that $F$ be continuous. In turn, Corollary \ref{useful} in this paper provides a necessary condition for a distribution to be an extreme point of the set of mpcs of a distribution with $n$-point support--it must have support on $n$ points or fewer. 


\subsection{The Main Result}

Throughout, bold font identifies vectors. Let $X = \mathbb{R}$ and denote by $\mathcal{B}$ the Borel subsets of $X$. Moreover, let $\mathcal{P}$ denote the set of Borel probability measures on $\left(X,\mathcal{B}\right)$. Let $P \in \mathcal{P}$ be a purely atomic probability measure with support on $n$ points i.e. $supp P = \vec{a} \coloneqq \left\{a_{1},a_{2},\dots,a_{n}\right\}$,  with respective masses $p_{1}, p_{2}, \dots, p_{n}$, where $p_{i}>0$ for all $i = 1,\dots, n$ and $\sum_{i}^{n}p_{i} = 1$. Denote $\Vec{p} \coloneqq (p_{1}, p_{2}, \dots, p_{n})$. Without loss of generality, $a_{1} < a_{2} < \cdots < a_{n}$. Let $Q \in \mathcal{P}$ be a purely atomic probability measure with support on $m$ points: $supp Q = \vec{b} \coloneqq \left\{b_{1},b_{2},\dots,b_{m}\right\}$,  with respective masses $q_{1}, q_{2}, \dots, q_{m}$, where $q_{i}>0$ for all $i = 1,\dots, m$ and $\sum_{i}^{m}q_{i} = 1$. Denote $\vec{q} \coloneqq (q_{1}, q_{2}, \dots, q_{m})$. Finally, denote $\vec{pa} \coloneqq (p_{1}a_{1}, p_{2}a_{2}, \dots, p_{n}a_{n})$ and analogously for $\vec{qb}$.

We adopt the following definition from \cite{hill} and \cite{elt} for simple fusions:\footnote{Recall that the notions of a fusion and an mpc of a probability measure $P$ are equivalent.}

\begin{definition}\label{def1}
$Q$ is a \hypertarget{fs}{\textbf{Simple Mean-Preserving Contraction (SMPC)}} of $P$ if there exists a non-negative row-stochastic (Markov) $n \times m$ matrix $F$ that satisfies
\[\Vec{p}F = \vec{q}\]
and
\[\left(\Vec{pa}\right)F = \vec{qb}\]
Denote by $\mathcal{m}\left(P\right)$ the set of all smpcs of $P$.
\end{definition}
Consequently, an smpc with $n+1$ atoms, $\vec{b}$, of an arbitrary $n$-atom probability measure $P$ with atoms, $\vec{a}$, can be defined by the $n \times (n+1)$ Markov matrix $F$. Each row of $F$, $F_i$, sums to $1$ and denotes the partition of the weight $P_i$ at atom $a_i$ across the atoms $\vec{b}$.

To better grasp the concept (and notation) consider the following example:
\begin{example}\label{ex1}
Let $P$ be a discrete probability distribution with support on three points:
\[P = \begin{Bmatrix} 
0  & \frac{1}{2} & 1 \\
\frac{3}{10} & \frac{3}{10} &  \frac{2}{5}
\end{Bmatrix}\]
where the top row is the vector of support, and the bottom row is the vector of associated probability weights. Then, $Q$ is a smpc of $P$, with an associated Markov matrix, $F$:
\[Q = \begin{Bmatrix} 
\frac{1}{6} & \frac{1}{2} & \frac{3}{4} & \frac{5}{6} \\
\frac{3}{10} & \frac{1}{5} & \frac{1}{5} & \frac{3}{10}
\end{Bmatrix} \quad \text{\&} \quad F = \begin{pmatrix}
\frac{2}{3} & \frac{1}{3} & 0 & 0\\
\frac{1}{3} & 0 & \frac{1}{3} & \frac{1}{3}\\
0 & \frac{1}{4} & \frac{1}{4} & \frac{1}{2}
\end{pmatrix}\]
\end{example}

After \cite{hill}, we construct the set of mean-preserving contractions of $P$, $\mathcal{M}\left(P\right)$, by taking the weak-* closure of $\mathcal{m}\left(P\right)$:

\begin{definition}\label{def2}
$R \in \mathcal{P}$ is a \hypertarget{fss}{\textbf{Mean-Preserving Contraction (MPC)}} of $P$ if there exists a sequence of smpcs $\left\{Q_{m}\right\}_{m=1}^{\infty} \subset \mathcal{m}\left(P\right)$ that satisfies $Q_{m} \to_{w} R$.\footnote{$\to_{w}$ denotes the standard idea of weak convergence.} Denote by $\mathcal{M}\left(P\right)$ the set of all mpcs of $P$.
\end{definition}

Next, we say that that $Q = \alpha Q' + (1-\alpha)Q''$ for smpcs $Q$, $Q'$, and $Q''$ if and only if \[\label{1529}\tag{$1$}F = \alpha F' + (1-\alpha)F''\] for Markov matrices $F$, $F'$, and $F''$, where $0$-columns are inserted into the Markov matrices corresponding to the zero-probability atoms in $Q'$ and $Q''$ but otherwise the ratios within columns in $F'$ and $F''$ are the same as in $F$.
Using this, we may now state the main theorem: 

\begin{theorem}\label{main}
Let $Q \in \mathcal{m}\left(P\right)$ be any smpc of $P$ with support on $n+1$ points, $n \geq 2$. Then $Q$ is the convex combination of two purely atomic probability measures $Q'$ and $Q''$; $Q', Q'' \in \mathcal{m}\left(P\right)$, each with support on at most $n$ points. $Q'$ and $Q''$ are unique.
\end{theorem}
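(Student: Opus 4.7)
My plan is to recast Theorem \ref{main} as a linear-algebra question about the Markov matrix $F$ representing $Q$ as an smpc of $P$. The ratio-preservation clause of (\ref{1529}) means that any decomposition must take the form $F'_{ij} = c'_j F_{ij}$ and $F''_{ij} = c''_j F_{ij}$ for non-negative vectors $c', c'' \in \mathbb{R}^{n+1}$; under this parametrization the hypotheses of the theorem translate into (i) $F c' = F c'' = \mathbf{1}_n$ (row-stochasticity of $F'$ and $F''$), (ii) each of $c'$ and $c''$ has at least one vanishing coordinate (the requirement that $Q'$ and $Q''$ have at most $n$ atoms), and (iii) $\alpha c' + (1-\alpha) c'' = \mathbf{1}_{n+1}$ (the matrix identity in (\ref{1529})). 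The task thus reduces to locating two boundary points of the polytope $\Pi := \{c \geq 0 : F c = \mathbf{1}_n\}$ whose convex hull contains the interior point $\mathbf{1}_{n+1}$.

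Next I would show that $\Pi$ is a compact line segment. Compactness is routine: every column of $F$ contains at least one strictly positive entry---otherwise the corresponding $b_j$ would not lie in the support of $Q$---which gives a uniform upper bound on each $c_j$. For one-dimensionality I would argue that $F$ has rank $n$; its kernel is then spanned by a single vector $k \in \mathbb{R}^{n+1}$, and since $\mathbf{1}_{n+1}$ lies in the relative interior of the non-negative orthant, the affine line $\{\mathbf{1}_{n+1} + \lambda k\}$ meets $\Pi$ in a segment $\lambda \in [\lambda_-, \lambda_+]$ with $\lambda_- < 0 < \lambda_+$. Setting $c' := \mathbf{1}_{n+1} + \lambda_- k$ and $c'' := \mathbf{1}_{n+1} + \lambda_+ k$ supplies two boundary points, each with at least one zero coordinate, and $\alpha := \lambda_+ / (\lambda_+ - \lambda_-) \in (0,1)$ then delivers $\mathbf{1}_{n+1} = \alpha c' + (1-\alpha) c''$.

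The induced matrices $F' = (c'_j F_{ij})$ and $F'' = (c''_j F_{ij})$ are automatically non-negative and row-stochastic; the mean-preserving identity follows column by column, by multiplying $\sum_i p_i a_i F_{ij} = q_j b_j$ by $c'_j$, so every surviving atom of $Q'$ remains at its original location $b_j$ and $Q'$, $Q''$ are honest elements of $\mathcal{m}(P)$. Uniqueness of the pair $(Q', Q'')$ is then immediate from the one-dimensionality of $\Pi$: the segment contains exactly two boundary points, so any admissible pair must consist of $c'$ and $c''$, and the mixing weight $\alpha$ is forced by $\alpha \lambda_- + (1-\alpha) \lambda_+ = 0$.

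I expect the principal obstacle to be establishing that $F$ has rank $n$. A nonzero left-nullvector $\boldsymbol{\alpha}$ would satisfy $\sum_i \alpha_i = 0$ (because $F \mathbf{1}_{n+1} = \mathbf{1}_n$) together with $\sum_i \alpha_i F_{ij} = 0$ for every $j$; I would try to show that any such row-dependence is incompatible with the barycenter formula $b_j = \sum_i p_i a_i F_{ij} / q_j$ producing $n+1$ distinct values of $b_j$. If the argument runs into trouble in edge cases, the fallback is to prove Theorem \ref{main} first for a full-rank representative of the coupling $(P, Q)$---for instance a vertex of the transportation polytope of couplings with marginals $P$ and $Q$---to which the preceding construction then applies without change.
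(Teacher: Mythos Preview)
Your approach is essentially a geometric repackaging of the paper's argument: the paper's ``linear dependence relation'' $\sum_j \hat c_j \vec{f}_j = \vec 0$ is exactly a kernel element $k$ of $F$, and the paper's ``zeroing'' of the columns $j^*$ and $j^{**}$ with maximal coefficients on each side is precisely your procedure of sliding $\mathbf 1_{n+1}$ along $\mathbf 1_{n+1}+\lambda k$ until the first coordinate hits zero in each direction. Your compactness argument for $\Pi$ is the analogue of the paper's verification that all entries of the new Markov matrices stay in $[0,1]$. So for the existence half of the theorem the two proofs coincide.

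The genuine issue is the one you flag: the claim that $F$ has full row rank $n$. This is \emph{false} in general, and your suggested route to it (incompatibility of a left-nullvector with $n+1$ distinct barycenters) does not go through. For instance, take $P$ uniform on $\{0,\tfrac12,1\}$ and
\[
F=\begin{pmatrix}0.4&0.3&0.2&0.1\\0.2&0.2&0.25&0.35\\0&0.1&0.3&0.6\end{pmatrix},
\]
whose second row is the average of the first and third, so $\operatorname{rank}F=2$. One checks directly that the four induced barycenters $b_j=\sum_i p_i a_i F_{ij}/\sum_i p_i F_{ij}$ are all distinct. Here $\ker F$ is two-dimensional, so your polytope $\Pi$ is a quadrilateral and $\mathbf 1_4$ is an interior point; choosing different chords through $\mathbf 1_4$ yields genuinely different pairs $(c',c'')$ --- e.g.\ one chord zeroes columns $1$ and $2$, another zeroes columns $3$ and $4$ --- and hence different $(Q',Q'')$.

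This means the uniqueness clause cannot be rescued by your fallback of passing to a full-rank vertex of the coupling polytope: that would at best give existence for a \emph{different} $F$, while the counterexample shows that for a fixed rank-deficient $F$ there are multiple admissible decompositions. Note that the paper's own uniqueness argument has the same lacuna: it only rules out zeroing a third column \emph{with respect to the chosen dependence relation} $(c_j)$, and is silent about what happens when $\ker F$ has dimension $\ge 2$. In short, your existence proof is correct and equivalent to the paper's; your instinct that full rank is the crux of uniqueness is right; but full rank genuinely fails, and neither your proposal nor the paper's proof handles that case.
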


The essence of the proof is as follows: consider the Markov matrix $F$ corresponding to a smpc $Q$ where $m = n+1$. Because $F$ is an $n \times (n+1)$ matrix, there is some subset of column vectors that is linearly dependent, and because all atoms are distinct, this subset has a minimal size of three. Hence, any of these column vectors can be expressed as a linear sum of the others. This means we can write a new matrix with any of these column vectors redistributed to the other vectors, such that the horizontal sum across vectors remains the same ($1$ for the whole matrix), the chosen vector is reduced to the $0$-vector, and each other column vector is either the $0$-vector or is a multiple of its original self, i.e., preserves its internal ratios. We refer to this procedure as ``zeroing'' a vector of the matrix. 

Thus, in the proof it suffices to show that we can always choose two different column vectors from the linearly dependent set such that zeroing either of them gives entries in the remaining vectors from $0$ to $1$. We obtain the uniqueness of $Q'$ and $Q''$ by showing that there is no third such vector that can be zeroed independently of the other two, that is, without also zeroing one of the other two vectors.

\begin{proof}
Let $\vec{0}$ and $\vec{1}$ denote the column vectors of all zeroes and ones, respectively. Without loss of generality, suppose the linearly dependent set consists of the entire $n \times n+1$ matrix $F$, that is, $n+1$ column vectors $\vec{f_{j}}$, where $\vec{0} \leq \vec{f_{j}} \leq \vec{1}$ and $\sum_{j=1}^{n+1}f_{ij} = 1$ for all $i$ or $\sum_{j=1}^{n+1}\vec{f_{j}} = \Vec{1}$, the component-wise sum of the vectors $\vec{f_{j}}$. Then there are scalars $\hat{c}_{j}$, not all $0$, such that $\sum_{j=1}^{n+1}\hat{c}_{j}\vec{f_{j}} = \vec{0}$. Because the vectors are non-negative, some of the $\hat{c}_{j}$ will be positive and some negative. Re-index the vectors so that \[\sum_{j=1}^{k}-\left|\hat{c}_{j}\right|\vec{f_{j}} + \sum_{j=k+1}^{n+1}\hat{c}_{j}\vec{f_{j}} = 0\]
for some $k \leq n$. Set $c_{j} = \left|\hat{c}_j\right|$ for all $j$. Then, \[\sum_{j=1}^{k}c_{j}\vec{f_{j}} = \sum_{j=k+1}^{n+1}c_{j}\vec{f_{j}}\]
where $c_{j} \geq 0$ for all $j$. Define \[c_{j^{*}} \coloneqq \max_{1 \leq j \leq k}\left\{c_{j}\right\}, \quad \text{\&} \quad c_{j^{**}} \coloneqq \max_{k+1 \leq j \leq n+1}\left\{c_{j}\right\}\]
Then, the corresponding $\vec{f_{j^{*}}}$ and $\vec{f_{j^{**}}}$ can be zeroed. To see that take one, say $\Vec{f_{j^{*}}}$:
\[\vec{f_{j^{*}}} = -\sum_{\substack{j \leq k\\ j \neq j^{*}}}\frac{c_{j}}{c_{j^{*}}}\vec{f_{j}} + \sum_{j \geq k+1}\frac{c_{j}}{c_{j^{*}}}\vec{f_{j}}\]
In the new Markov matrix for the new mpc, set \[\vec{f_{j}'} = \begin{cases} \left(1 - \frac{c_{j}}{c_{j^{*}}}\right)\vec{f_{j}}, & \quad j \leq k, j \neq j^{*}\\ 
\left(1 + \frac{c_{j}}{c_{j^{*}}}\right)\vec{f_{j}}, & \quad j \geq k+1\\
\vec{0}, & \quad j = j^{*}
\end{cases}\]
We have produced matrix $F'$ by zeroing vector $\vec{f_{j^{*}}}$. By construction, for $j \leq k$, $j \neq j^{*}$:
\[0 \leq \frac{c_{j}}{c_{j^{*}}} \leq 1\]
and so \[0 \leq \left(1 - \frac{c_{j}}{c_{j^{*}}}\right)f_{ij} \leq 1, \quad \text{for all} \quad i, j\]
Also, \[\vec{0} \leq \sum_{\substack{j \leq k\\ j \neq j^{*}}}\left(1 - \frac{c_{j}}{c_{j^{*}}}\right)\vec{f_{j}}<\sum_{\substack{j \leq k\\ j \neq j^{*}}}\vec{f_{j}}<\vec{1}\]
Then  \[\vec{1} = \sum_{j=1}^{n+1}\vec{f_{j}} = \sum_{j=1}^{n+1}\vec{f_{j}'} = \sum_{\substack{j \leq k\\ j \neq j^{*}}}\left(1 - \frac{c_{j}}{c_{j^{*}}}\right)\vec{f_{j}} + \sum_{\substack{j \geq k+1}}\left(1 + \frac{c_{j}}{c_{j^{*}}}\right)\vec{f_{j}}\]
This means \[\sum_{\substack{j \geq k+1}}\left(1 + \frac{c_{j}}{c_{j^{*}}}\right)\vec{f_{j}} \leq \vec{1}\]
i.e. for all $i$, \[\sum_{j \geq k+1}\left(1+\frac{c_{j}}{c_{j}^{*}}\right)f_{ij} \leq 1\]
and since \[\left(1 + \frac{c_{j}}{c_{j^{*}}}\right) \geq 1\]
for all $j$, we have
\[0 \leq \left(1 + \frac{c_{j}}{c_{j^{*}}}\right)f_{ij} \leq 1\]

As a result, for the two zeroed column vectors we create the two distinct $n \times n+1$ Markov matrixes, $F'$ and $F''$, containing different zero vectors, required for Equation \ref{1529}. From the linear dependence the coefficient $\alpha$ in Equation \ref{1529} can be calculated. $Q'$ and $Q''$ are given, of course, by the partitions obtained by removing the zero column vectors from $F'$ and $F''$.

To see that $F'$ and $F''$, and therefore $Q'$ and $Q''$, are unique, suppose for the sake of contradiction that independently of $\vec{f_{j^{*}}}$ and $\vec{f_{j^{**}}}$ there is a third $\vec{f_{j^{\dagger}}}$ that can be zeroed, where $j^{*} \neq j^{\dagger} \neq j^{**}$. Without loss of generality suppose $1 \leq j^{\dagger} \leq k$. Then by definition $c_{j^{\dagger}} \leq c_{j^{*}}$. Without loss of generality let $c_{j^{\dagger}} < c_{j^{*}}$ (in the case of an equality zeroing $\vec{f_{j^{*}}}$ would also zero $\Vec{f_{j^{\dagger}}}$, giving $F'$ and $F''$ again). Thus,
\[\vec{f_{j^{\dagger}}} = -\sum_{\substack{j \leq k\\ j \neq j^{\dagger}}}\frac{c_{j}}{c_{j^{\dagger}}}\vec{f_{j}} + \sum_{\substack{j \geq k+1}}\frac{c_{j}}{c_{j^{\dagger}}}\Vec{f_{j}}\]
Then, because \[\frac{c_{j^{*}}}{c_{j^{\dagger}}} > 1\]
the other zeroing produces the following component, \[f_{ij^{*}}'' = \left(1- \frac{c_{j^{*}}}{c_{j^{\dagger}}}\right)f_{ij^{*}} < 0\]
which is not permitted.
\end{proof}

Next, let us revisit the distribution $P$ and smpc $Q$ from Example \ref{ex1} and the two smpcs, $Q'$ and $Q''$, whose convex combination yields $Q$:

\begin{example}
Recall $P$, $Q$, and $F$ from Example \ref{ex1}. Then,
\[F' = \begin{pmatrix}
\frac{5}{6} & \frac{1}{6} & 0 & 0\\
\frac{5}{12} & 0 & 0 & \frac{7}{12}\\
0 & \frac{1}{8} & 0 & \frac{7}{8}
\end{pmatrix}, \quad \text{\&} \quad F'' = \begin{pmatrix}
\frac{4}{9} & \frac{5}{9} & 0 & 0\\
\frac{2}{9} & 0 & \frac{7}{9} & 0\\
0 & \frac{5}{12} & \frac{7}{12} & 0
\end{pmatrix} \] Thus, 
\[Q' = \begin{Bmatrix} 
\frac{1}{6}  & \frac{1}{2} & \frac{5}{6} \\
\frac{3}{8} & \frac{1}{10} & \frac{21}{40}
\end{Bmatrix}, \quad \text{\&} \quad  Q'' = \begin{Bmatrix} 
 \frac{1}{6} & \frac{1}{2} & \frac{3}{64} \\
 \frac{1}{5} & \frac{1}{3} & \frac{7}{15}
\end{Bmatrix}\]
and $\alpha = 4/7$.
\end{example}

\begin{corollary}\label{useful}
Any mpc $Q \in \mathcal{M}\left(P\right)$ is a mixture of smpcs with support on at most $n$ points. 
\end{corollary}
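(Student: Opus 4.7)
The plan is to handle the two halves of the definition of $\mathcal{M}(P)$ in turn. First I would establish the claim for every $Q \in \mathcal{m}(P)$ by induction on the size $m$ of its support. The base case $m \leq n$ is immediate, since $Q$ itself is the required (one-term) mixture. For the inductive step with $m > n$, the associated $n \times m$ Markov matrix has $m > n$ columns in $\mathbb{R}^n$, which are therefore linearly dependent, and the zeroing construction from the proof of Theorem \ref{main} applies verbatim---the only feature of that proof that used $m = n+1$ was the existence of a non-trivial linear dependence among the columns of $F$. This yields $Q = \alpha Q' + (1-\alpha) Q''$ with $Q', Q'' \in \mathcal{m}(P)$ each supported on at most $m - 1$ points, to which the inductive hypothesis applies.

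The second step lifts this to arbitrary $Q \in \mathcal{M}(P)$ by approximation. Let $\mathcal{m}_{\leq n}(P) \subset \mathcal{m}(P)$ denote the set of smpcs of $P$ with support on at most $n$ points, and choose $\{Q_k\} \subset \mathcal{m}(P)$ with $Q_k \to_w Q$. Step 1 gives, for each $k$, a representation $Q_k = \sum_{i} \lambda_k^{i} R_k^{i}$ with $R_k^{i} \in \mathcal{m}_{\leq n}(P)$, which I package as the finitely supported probability measure $\mu_k = \sum_i \lambda_k^i \delta_{R_k^i}$ on $\mathcal{m}_{\leq n}(P)$. Every element of $\mathcal{m}_{\leq n}(P)$ is supported in the compact interval $[a_1, a_n]$ and is parametrized by at most $n$ atom locations in $[a_1, a_n]$ together with weights in the $n$-simplex, subject to the closed constraints $\vec{p}F = \vec{q}$ and $(\vec{pa})F = \vec{qb}$, so $\mathcal{m}_{\leq n}(P)$ is weak-$*$ compact. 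Prokhorov's theorem then produces a subsequence along which $\mu_{k_j}$ converges weakly to some probability measure $\mu$ on $\mathcal{m}_{\leq n}(P)$; weak-$*$ continuity of the barycenter map $\nu \mapsto \int R \, d\nu(R)$ on measures with a common compact support then gives $Q = \lim_j Q_{k_j} = \int R \, d\mu(R)$, exhibiting $Q$ as a mixture of smpcs with support on at most $n$ points.

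The main technical obstacle is the compactness step: one must verify that a weak-$*$ limit of smpcs of $P$ with at most $n$ atoms is itself an smpc of $P$ with at most $n$ atoms. This holds because atoms can only coalesce (not multiply) in the limit, and the linear constraints tying $(F, \vec{q}, \vec{b})$ to $(\vec{p}, \vec{a})$ are preserved under taking limits on the underlying finite-dimensional parameter space. Continuity of the barycenter map on tight families of probability measures is then standard, so the induction in Step 1 combined with this soft limit argument completes the proof.
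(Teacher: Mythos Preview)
Your proof is correct and follows the same two-step strategy as the paper: iterate the zeroing construction of Theorem~\ref{main} to handle all smpcs, then pass to weak limits for general $Q \in \mathcal{M}(P)$. Your treatment of the limiting step---packaging the finite mixtures as measures on $\mathcal{m}_{\leq n}(P)$, invoking compactness and Prokhorov, and using continuity of the barycenter map---is considerably more detailed than the paper's one-line ``taking weak limits,'' but the underlying approach is the same.
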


\begin{proof}
From Theorem \ref{main}, given a purely atomic probability measure $P$ with support on $n$ points, any smpc $Q \in \mathcal{m}\left(P\right)$ with support on $n+1$ points is a convex combination of two smpcs $Q', Q'' \in \mathcal{m}\left(P\right)$ with support on at most $n$ points. If $m = n+1$ then the result is simply Theorem \ref{main}. If not, simply iterate backward until the desired mixture of smpcs of $P$ with support on $n$ points is obtained.

Taking weak limits, this result holds for mpcs of $P$ that are not purely atomic.
\end{proof}


\section{Applications}

Here we illustrate the usefulness of Corollary \ref{useful} in Bayesian persuasion problems. The first subsection illustrates how this result yields easily an upper bound for the number of messages required for the optimal mechanism in a (linear) persuasion problem. The second subsection contains a pair of results related to competitive persuasion. There, we observe that Corollary \ref{useful} simplifies the task of solving for an equilibrium and aids us in the construction of mixed strategy equilibria in such games.

\subsection{A Bound for Linear Persuasion Problems}

One particularly tractable class of persuasion problems are those in which the state is a real-valued random variable and the sender's and receiver's optimal actions are linear in the state , and hence depend only on the expected state (the posterior mean). It is a standard result (see, e.g., \cite{martini}) that this allows us to simplify the persuader's problem to one of choosing any distribution of posterior means that is an mpc of the prior (or equivalently, any cdf over values that second-order stochastically dominates the prior). 

In such an environment, given prior $P$, the sender's persuasion problem is thus
\[\max_{Q} \mathbb{E}_{Q}\left[u\left(x\right)\right], \quad \text{subject to} \quad Q \in \mathcal{M}\left(P\right)\]
where $u\left(x\right)$ is the \textit{ex-post} utility of the sender from inducing posterior mean $x$. 

\cite{kam} use the Fenchel-Bunt extension (see e.g. Theorem 1.3.7 of \cite{ur}) of Carath\'eodory's theorem to show that an optimal signal in an $n$-state persuasion problem requires at most $n$ signal realizations. Our results above allow us to derive an analog of this result for linear persuasion problems with ease. 

By \cite{hill} the set $\mathcal{M}\left(P\right)$ is convex and closed. By the Riesz Representation theorem, the space of all probability measures with the same (compact) support is compact in the weak-* topology. Since $\mathcal{M}\left(P\right)$ is a closed subset of this space it must also be weak-* compact and therefore Hausdorff.

Bauer's Maximum Principle (see e.g. Corollary 7.70 of \cite{hitch}) implies that a linear functional such as the expectation operator on a (non-empty) compact and convex set $A$ of a locally-convex Hausdorff space attains its maximum at an extreme point of $A$. By Corollary \ref{useful} a necessary condition for $Q$ to be an extreme point of $\mathcal{M}\left(P\right)$ is that it have support on at most $n$ points. Thus, we have proved the following result.

\begin{proposition}\label{bound}
In linear persuasion problems in which the prior, $P$, has support on $n$ points, the optimal signal requires at most $n$ messages to be used.
\end{proposition}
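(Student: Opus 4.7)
The plan is to combine the three ingredients the authors have already assembled into a short extremal-point argument. First I would verify that the hypotheses of Bauer's Maximum Principle are satisfied: the paper has already recorded (via \cite{hill}) that $\mathcal{M}(P)$ is convex and closed, and the subsequent Riesz-representation/weak-$*$ compactness argument shows that $\mathcal{M}(P)$ is a non-empty, compact, convex subset of the locally convex Hausdorff space of signed Borel measures on the compact support of $P$. The sender's objective $Q \mapsto \mathbb{E}_{Q}[u(x)]$ is an affine (in fact, linear) functional of $Q$, and is continuous in the weak-$*$ topology provided $u$ is continuous on the (compact) support; since posterior means lie in $[a_{1},a_{n}]$, this is an innocuous assumption and the problem as formulated already takes it for granted.

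Given this, Bauer's Maximum Principle guarantees the existence of a maximizer $Q^{*} \in \mathcal{M}(P)$ that is an extreme point of $\mathcal{M}(P)$. The remaining task is therefore to bound the support of an arbitrary extreme point of $\mathcal{M}(P)$ by $n$. This is where Corollary \ref{useful} does the work: any $Q \in \mathcal{M}(P)$ can be written as a mixture of smpcs of $P$, each with support on at most $n$ points. If $Q$ itself has support on strictly more than $n$ points, then it cannot coincide with any single smpc in such a mixture, so at least two distinct components of the mixture must appear with positive weight. This exhibits $Q$ as a nontrivial convex combination of two other elements of $\mathcal{M}(P)$, contradicting extremality. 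Hence every extreme point of $\mathcal{M}(P)$ has support on at most $n$ points, and in particular $Q^{*}$ does.

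The step I expect to require the most care is the second one: turning Corollary \ref{useful}, which a priori provides a mixture representation (possibly involving a limit of finite mixtures, since $Q$ could be a weak-$*$ limit of smpcs), into a genuine nontrivial convex decomposition of any $Q$ with support larger than $n$. For purely atomic $Q$ with $|supp\, Q|>n$, this is immediate from Theorem \ref{main} applied iteratively as in the proof of Corollary \ref{useful}: one obtains $Q = \alpha Q' + (1-\alpha) Q''$ with $\alpha \in (0,1)$ and $Q',Q'' \in \mathcal{m}(P)$ of strictly smaller support, hence $Q' \neq Q \neq Q''$. Since our optimizer $Q^{*}$ is the argmax of a linear functional and extremality is what we need, we do not need to worry about non-atomic extreme points here; we only need to rule out the possibility that the optimal signal has more than $n$ atoms, which the above contradicts.

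Putting the pieces together yields Proposition \ref{bound}: Bauer's principle produces an extreme-point optimizer, and Corollary \ref{useful} forces any such extreme point to have at most $n$ atoms, so an optimal signal in the linear persuasion problem never needs more than $n$ messages.
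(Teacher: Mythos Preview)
Your proposal is correct and follows essentially the same route as the paper: establish that $\mathcal{M}(P)$ is a nonempty compact convex subset of a locally convex Hausdorff space, invoke Bauer's Maximum Principle to obtain an extreme-point optimizer, and then use Corollary~\ref{useful} to conclude that any extreme point has support on at most $n$ points. The paper's argument is the same three steps, stated more tersely; your additional discussion of how the mixture representation in Corollary~\ref{useful} yields a genuine nontrivial convex decomposition is a welcome elaboration of a point the paper leaves implicit.
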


\subsection{Simplifying Competitive Persuasion}\label{example}

Now consider a simple competitive persuasion problem as studied by \cite{Au2}. There are $k$ competing sellers who want a single risk-neutral buyer to purchase their good. The sellers sell products with quality given by an i.i.d. random variable $X_{i}$, $i = 1,...,k$, distributed according to the purely atomic measure $P$. 

Each seller, without knowing the quality of her (or the other sellers') goods, simultaneously chooses a Blackwell experiment conditioned on her type. The buyer observes the realization of the $k$ experiments and selects the seller whose product's posterior expected value is highest (and randomizes fairly when indifferent). A seller gets a payoff normalized to one if the buyer selects her and zero otherwise. Because it is only the expected value of the product that matters to the buyer (and because the sellers' payoffs are state-independent), this is a linear persuasion problem and so again each seller's problem is to choose a distribution of posterior means that is an mpc of the prior. That is, each seller's set of pure strategies is $\mathcal{M}\left(P\right)$.

As \cite{Au2} show, the unique symmetric equilibrium in pure strategies requires an uncountable number of signal realizations--each seller's equilibrium distribution, $Q$, is continuous except possibly at the upper bound of its support. However, such an equilibrium may be infeasible practically, either because the firms are unable to construct a signal of the necessary complexity or due to cognitive constraints suffered by the citizens. Corollary \ref{useful} allows us to sidestep this issue:

\begin{proposition}\label{mix}
For the $k$ player competitive persuasion game in which the prior distribution over qualities has support on $n$ points, there exists a symmetric mixed strategy equilibrium supported on mpcs with support on at most $n$ points.
\end{proposition}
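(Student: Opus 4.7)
The plan is to leverage the unique symmetric pure-strategy equilibrium $Q^{\star} \in \mathcal{M}\left(P\right)$ that \cite{Au2} establish and to disaggregate it via Corollary \ref{useful} into a mixture supported on mpcs with support on at most $n$ points.

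First, I would record that the set $\mathcal{M}_{n}\left(P\right)$ of mpcs of $P$ with support on at most $n$ points is weak-$*$ compact. It is a subset of the weak-$*$ compact set $\mathcal{M}\left(P\right)$, and if $Q_{k} \to_{w} Q$ with each $Q_{k}$ having at most $n$ atoms, then passing to a subsequence so that the atom locations and weights converge shows $Q$ has at most $n$ atoms. Since $\Delta\left(\mathcal{M}_{n}\left(P\right)\right)$ is therefore itself weak-$*$ compact and the barycenter map is continuous, Corollary \ref{useful} (or equivalently Choquet's theorem applied to the compact convex set $\mathcal{M}\left(P\right)$ together with the paper's observation that every extreme point has support on at most $n$ points) lets me fix a Borel probability measure $\mu$ on $\mathcal{M}_{n}\left(P\right)$ with barycenter $Q^{\star}$, i.e. $\int Q\,d\mu(Q) = Q^{\star}$.

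Next I would verify that the symmetric profile in which every seller plays $\mu$ is a Nash equilibrium. The key observation is payoff equivalence: if each of the $k-1$ opponents independently plays $\mu$, then the joint distribution of their realized posterior means is i.i.d.\ from the marginal $Q^{\star}$, which is exactly the joint distribution produced when all opponents use the pure strategy $Q^{\star}$. This is because the buyer's choice (including the fair tie-breaking rule) depends only on the realized vector of posterior means, not on how it was generated. Hence the expected payoff $V(Q)$ to seller $i$ from any pure strategy $Q$ against opponents playing $\mu$ coincides with the payoff against opponents playing $Q^{\star}$, and in particular $V$ is linear in $Q$.

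Finally, since $Q^{\star}$ is a best response to itself in the original game, $V(Q^{\star}) = \max_{Q \in \mathcal{M}(P)} V(Q)$, so linearity of $V$ gives
\[
V(Q^{\star}) = V\!\left(\int Q\,d\mu(Q)\right) = \int V(Q)\,d\mu(Q).
\]
Because $V(Q) \le V(Q^{\star})$ for every $Q \in \mathcal{M}\left(P\right)$, I conclude $V(Q) = V(Q^{\star})$ for $\mu$-almost every $Q$. Every pure strategy in the support of $\mu$ is thus itself a best response, so $\mu$ is a mixed-strategy best response to $\mu$, yielding a symmetric equilibrium entirely supported on mpcs with support on at most $n$ points. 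The main obstacle is the first step---ensuring that the decomposition supplied by Corollary \ref{useful} can be packaged as a single Borel measure on $\mathcal{M}_{n}\left(P\right)$ rather than a mere sequence of finite mixtures---which the weak-$*$ compactness argument (or Choquet's theorem) handles cleanly.
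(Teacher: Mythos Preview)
Your proposal is correct and follows the same approach as the paper: start from the symmetric pure-strategy equilibrium $Q^{\star}$ of \cite{Au2}, decompose it via Corollary \ref{useful} into a mixture over mpcs with at most $n$-point support, and observe that since opponents playing this mixture induce the same marginal (hence joint) distribution of posterior means as opponents playing $Q^{\star}$, the mixture is itself a symmetric equilibrium. Your version is considerably more explicit---packaging the decomposition as a Borel measure via compactness of $\mathcal{M}_{n}(P)$ and verifying via linearity that $\mu$-almost every pure strategy is a best response---whereas the paper compresses all of this into a single sentence.
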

\begin{proof}
Theorem 4 in \cite{Au2} establishes that a symmetric pure strategy equilibrium exists. Let $Q^{*}$ denote an equilibrium pure strategy distribution over (expected) qualities. By Corollary \ref{useful}, $Q^{*}$ can be obtained as a mixture of mpcs, each with support on at most $n$ points. Since this mixture results in the same distribution over qualities and the same payoff to each player as the pure strategy, $Q^{*}$, there must also be an equilibrium of the game in which each player mixes.
\end{proof}

As the next example illustrates, Corollary \ref{useful} can also be used to easily establish whether candidate pairs of strategies are equilibria.

\begin{example}
Consider the two player competitive persuasion problem with
\[P = \begin{Bmatrix} 
0 & \frac{1}{2} & \frac{3}{4}\\
\frac{1}{6} & \frac{1}{2} & \frac{1}{3}
\end{Bmatrix}\]
We claim that there is a Nash Equilibrium in which both sellers choose the cdf 
\[F(x) = \begin{cases}
\frac{2}{3}x, & \quad 0 \leq x \leq \frac{1}{2}\\
\frac{8}{3}x-1, & \quad \frac{1}{2} \leq x \leq \frac{3}{4}
\end{cases}\]
From Corollary \ref{useful} it suffices to check that there is no profitable deviation to an mpc with support on three points. To that end, suppose that seller $2$ chooses such a distribution, distribution $Q$:
\[Q = \begin{Bmatrix} 
a & b & c\\
p & q & r
\end{Bmatrix}\]
where $Q \in \mathcal{M}\left(P\right)$, $pa + qb + rc = 1/2$, and $p + q + r = 1$. Without loss of generality $a \leq 1/2$ and $c \geq 1/2$. First suppose that $b \leq 1/2$. Then, seller $2$'s payoff from deviating is 
\[\begin{split}
    u_{2} &= p F(a) + q F(b) + r F(c) = \frac{1}{3} + r\left(2c-1\right)
\end{split}\]
Because $c \leq 1/2 + 1/(12r)$,
\[\begin{split}
    u_{2} &\leq \frac{1}{3} + r\left(1+\frac{1}{6r}-1\right) = \frac{1}{2}
\end{split}\]
In a similar manner, for $b \geq 1/2$ we have
\[\begin{split}
    u_{2} &= p F(a) + q F(b) + r F(c) = \frac{1}{3} + p\left(1-2a\right)\\
\end{split}\]
where we use the fact that $p+q+r=1$ and that $pa + qb + rc = 1/2$. Then, since $a \geq (1-1/(6p))/2$
\[\begin{split}
    u_{2} &\leq \frac{1}{3} + p\left(1+\frac{1}{6p}-1\right) = \frac{1}{2}
\end{split}\]
Thus, there is no profitable deviation to any mpc with three point support and so no profitable deviation to any mpc.

\end{example}

\bibliography{sample.bib}

\begin{thebibliography}{21}
\providecommand{\natexlab}[1]{#1}
\providecommand{\url}[1]{\texttt{#1}}
\expandafter\ifx\csname urlstyle\endcsname\relax
  \providecommand{\doi}[1]{doi: #1}\else
  \providecommand{\doi}{doi: \begingroup \urlstyle{rm}\Url}\fi

\bibitem[Aliprantis and Border(2006)]{hitch}
Charalambos~D. Aliprantis and Kim Border.
\newblock \emph{Infinite Dimensional Analysis: A Hitchhiker's Guide}.
\newblock Springer, 2006.

\bibitem[Au and Kawai(2020)]{Au2}
Pak~Hung Au and Keiichi Kawai.
\newblock Competitive information disclosure by multiple senders.
\newblock \emph{Games and Economic Behavior}, 119:\penalty0 56 -- 78, 2020.

\bibitem[Bergemann and Morris(2019)]{mor}
Dirk Bergemann and Stephen Morris.
\newblock Information design: A unified perspective.
\newblock \emph{Journal of Economic Literature}, 57\penalty0 (1):\penalty0
  44--95, March 2019.

\bibitem[Blackwell(1953)]{black}
David Blackwell.
\newblock Equivalent comparisons of experiments.
\newblock \emph{The Annals of Mathematical Statistics}, 24\penalty0
  (2):\penalty0 265--272, 1953.

\bibitem[Dworczak and Martini(2019)]{martini}
Piotr Dworczak and Giorgio Martini.
\newblock The simple economics of optimal persuasion.
\newblock \emph{Journal of Political Economy}, 127\penalty0 (5):\penalty0
  1993--2048, 2019.

\bibitem[Elton and Hill(1992)]{hill}
J.~Elton and T.~P. Hill.
\newblock Fusions of a probability distribution.
\newblock \emph{The Annals of Probability}, 20\penalty0 (1):\penalty0 421--454,
  1992.

\bibitem[Elton and Hill(1998)]{elt}
J~Elton and T.P Hill.
\newblock On the basic representation theorem for convex domination of
  measures.
\newblock \emph{Journal of Mathematical Analysis and Applications},
  228\penalty0 (2):\penalty0 449 -- 466, 1998.

\bibitem[Gentzkow and Kamenica(2016)]{gent}
Matthew Gentzkow and Emir Kamenica.
\newblock A rothschild-stiglitz approach to bayesian persuasion.
\newblock \emph{American Economic Review}, 106\penalty0 (5):\penalty0 597--601,
  5 2016.

\bibitem[Hardy et~al.(1929)Hardy, Littlewood, and G.]{hard29}
G.H. Hardy, J.E. Littlewood, and Polya G.
\newblock Some simple inequalities satisfied by convex functions.
\newblock \emph{Messenger of Mathematics}, 58:\penalty0 145–152, 1929.

\bibitem[Hardy et~al.(1959)Hardy, Littlewood, and G.]{hard}
G.H. Hardy, J.E. Littlewood, and Polya G.
\newblock \emph{Inequalities}.
\newblock Cambridge University Press, 1959.

\bibitem[Hiriart-Urruty and Lemar\'echal(2001)]{ur}
Jean-Baptiste Hiriart-Urruty and Claude Lemar\'echal.
\newblock \emph{Fundamentals of Convex Analysis}.
\newblock Springer, 2001.

\bibitem[Jain and Whitmeyer(2019)]{jain}
Vasudha Jain and Mark Whitmeyer.
\newblock {Competing to Persuade a Rationally Inattentive Agent}.
\newblock \emph{ArXiv e-prints}, 04 2019.

\bibitem[Kamenica(2019)]{kam2}
Emir Kamenica.
\newblock Bayesian persuasion and information design.
\newblock \emph{Annual Review of Economics}, 11\penalty0 (1):\penalty0
  249--272, 2019.

\bibitem[Kamenica and Gentzkow(2011)]{kam}
Emir Kamenica and Matthew Gentzkow.
\newblock Bayesian persuasion.
\newblock \emph{The American Economic Review}, 101\penalty0 (6):\penalty0
  2590--2615, 2011.

\bibitem[Kleiner et~al.(2020)Kleiner, Moldovanu, and Strack]{strack}
Andy Kleiner, Benny Moldovanu, and Philipp Strack.
\newblock Extreme points and majorization: Economic applications.
\newblock \emph{Mimeo}, 2020.

\bibitem[Kolotilin(2018)]{kol2}
Anton Kolotilin.
\newblock Optimal information disclosure: A linear programming approach.
\newblock \emph{Theoretical Economics}, 13\penalty0 (2):\penalty0 607--635,
  2018.

\bibitem[Kolotilin et~al.(2017)Kolotilin, Mylovanov, Zapechelnyuk, and Li]{kol}
Anton Kolotilin, Tymofiy Mylovanov, Andriy Zapechelnyuk, and Ming Li.
\newblock Persuasion of a privately informed receiver.
\newblock \emph{Econometrica}, 85\penalty0 (6):\penalty0 1949--1964, 2017.

\bibitem[Rayo and Segal(2010)]{rayo}
Luis Rayo and Ilya Segal.
\newblock Optimal information disclosure.
\newblock \emph{Journal of Political Economy}, 118\penalty0 (5):\penalty0
  949--987, 2010.

\bibitem[Rothschild and Stiglitz(1970)]{rot}
Michael Rothschild and Joseph~E Stiglitz.
\newblock Increasing risk: I. a definition.
\newblock \emph{Journal of Economic Theory}, 2\penalty0 (3):\penalty0 225 --
  243, 1970.

\bibitem[Strassen(1965)]{strass}
Volker Strassen.
\newblock The existence of probability measures with given marginals.
\newblock \emph{Annals of Mathematical Statistics}, 36\penalty0 (2):\penalty0
  423--439, 04 1965.

\bibitem[Whitmeyer(2018)]{whit3}
Mark Whitmeyer.
\newblock {Dynamic Competitive Persuasion}.
\newblock \emph{ArXiv e-prints}, 11 2018.

\end{thebibliography}
\end{document}